\newtheorem{proposition}{Proposition}
\newtheorem{assump}{Assumption}
\newcommand\asim{\mathrel{\overset{\makebox[0pt]{\mbox{\scriptsize a}}}{\sim}}}
\newcommand*\e{\mathbb{E}}
\DeclareMathOperator*{\plim}{plim}
\def\expandafter\normalsize\expandafter{%
    \normalsize%
    \setlength\abovedisplayskip{8pt}%
    \setlength\belowdisplayskip{8pt}%
    \setlength\abovedisplayshortskip{8pt}%
    \setlength\belowdisplayshortskip{8pt}%
}
\journal{arXiv}
\begin{document}

\begin{frontmatter}



\title{Modelling with Sensitive Variables} 


\author[chan]{Felix Chan}

\affiliation[chan]{organization={School of Accounting, Economics and Finance, Curtin University}}

\author[matyas]{L\'aszl\'o M\'aty\'as}

\affiliation[matyas]{organization={Department of Economics, Central European University}}

\author[reguly1,reguly2]{\'Agoston Reguly\corref{cor1}}
\cortext[cor1]{Corresponding author.\\ 
\textit{Acknowledgments}: Contribution by Balázs Kertész to earlier versions of this paper is kindly acknowledged. We are grateful for all the useful comments on different versions of this paper by William Green, Tom Wansbeek, Botond Köszegi, Robert Lieli, Gábor Békés, Miklós Koren, Michael Knaus, participants at the Central European University's Seminar, the IAAE and EEA conferences. Special thanks to the International Association for Applied Econometrics for the financial support of the 2019 conference presentation.\\
Replication codes are available at \href{https://github.com/regulyagoston/Split-sampling}{https://github.com/regulyagoston/Split-sampling} }
\affiliation[reguly1]{
organization={Institute of Economics, Corvinus University of Budapest}}
\affiliation[reguly2]{organization={Scheller College of Business, Georgia Institute of Technology}}

\begin{abstract}
The paper deals with models in which the dependent variable, some explanatory variables, or both represent sensitive data. We introduce a novel discretization method that preserves data privacy when working with such variables. A multiple discretization method is proposed that utilizes information from the different discretization schemes. We show convergence in distribution for the unobserved variable and derive the asymptotic properties of the OLS estimator for linear models. Monte Carlo simulation experiments presented support our theoretical findings. Finally, we contrast our method with a differential privacy method to estimate the Australian gender wage gap.
\end{abstract}



\begin{keyword}
sensitive variable \sep discretization \sep interval censored variables \sep differential privacy \sep point identification


\end{keyword}

\end{frontmatter}



\nocite{reguly2024doc_app}
\section{Introduction}
Over the last decade, there has been an explosion in the amount of data collected on individuals, firms, the economy, and society. Governments and the private sector gather large amounts of information that researchers and analysts use on a daily basis. Privacy constraints to share these data or willingness to respond in case of surveys, often prevent access to sensitive data in its original form. We introduce here a novel sampling method that naturally protects individual privacy or increases the willingness to answer through discretization, while minimizes information loss from an estimation perspective.
%
\par
Income is perhaps one of the most commonly used sensitive variable in economics. Instead of providing the sensitive data itself or asking directly for the actual weekly personal income, we work with discrete categories such as \textit{`weekly personal income below $\$100$, between $\$100$ and $\$400$, or above $\$400$}. It is well known in the econometric literature that this simple discretization leads to modeling problems (see, e.g., \citealt{hsiao1983regression}, \citealt{manski2002inference}, \citealt{beresteanu2008asymptotic}, \citealt{beresteanu2011discretized}, \citealt{pacini2019interval}, or \citealt{abrevaya2020interval_fe}), which we solve by using multiple discretization schemes. In the context of income, instead of using only one discretization scheme, we use multiple discretizations by changing the interval boundaries. For example, discretization scheme (2): below $\$50$, between $\$50$ and $\$350$ or above $\$350$; discretization scheme (3): below $\$150$, between $\$150$ and $\$450$ or above $\$450$, etc. We call the method \textit{split sampling},\footnote{The term \textit{split sampling} in this paper is not related to the technique occasionally used in chromatography (see e.g., \citealp{split1977gas1}) or cross-validation methods in machine learning, which splits the initial sample into folds.} and show convergence in distribution for the discretized sensitive variable to the original unobserved variable as both the number of observations and the number of discretization schemes go to infinity.
%
%
\par
Our solution is simple in the sense that each discretization scheme allows (maintains) data protection or enhances the willingness to respond, while the properties of the underlying distribution are recovered by the combination of the resulting discretized variable. Building on our result of convergence in the distribution, we investigate three types of regression: i) the sensitive variable is an explanatory variable; ii) the sensitive variable is the outcome variable; and iii) sensitive variables are on both sides of the regression model. We use our split sampling method to point identify parameters in such linear models and derive the properties of the OLS estimator. The solution to identifying parameters in such context, is to condition the linear regression model on the appropriate known discretization intervals. The definition of the intervals, varies based on where the sensitive variable(s) are in the regression. Then split sampling allows to approximate the unknown distribution of this variable and obtain the conditional expectations needed to identify the parameter of interest without revealing their actual values. 
We show that this general method yields a consistent and asymptotically normal estimator for the parameters of interest in linear models. We then extend our discussion to nonlinear models and panel data as well. To demonstrate our method's finite sample properties, we provide some Monte Carlo evidence and apply our procedure to wage data from the Australian Tax Office (ATO) to investigate the gender wage gap in Australia.
%
%
\par
Our method complements the large literature on differential privacy (DP) to protect data privacy. DP has received much attention in the last decade and has become the industry standard.\footnote{It has been used by many technology companies, including Google \cite{erlingsson2014rappor}, Microsoft \citep{ding2017collecting}, LinkedIn \citep{kenthapadi2018pripearl}, or Facebook. The United States Census Bureau employed DP in 2020 to safeguard individual confidentiality in the U.S. decennial census.} DP quantifies the notion of privacy for downstream tasks and aims to protect the most extreme observations as well. (see, e.g., \citealt{dwork2006differentialprivacy}, \citealt{jordan2015machine}, or \citealt{bi2023distribution}). There are two large branches of differential privacy; the first is ``standard'' or ``central'' DP, where data owners can only publish randomized statistics. Our paper relates to the second branch, which is called ``local'' DP, where data owners do not trust the ``central server'' or are not allowed/want to share the data in its original form. Local DP has multiple variants\footnote{\cite{dwork2014algorithmic} uses a Laplace mechanism, \cite{rohde2020geometrizing} uses local differential privacy, or \cite{avella2021privacy} proposes parametric estimation, just to list a few variants.} and the major issue is that such data privatization may alter the analysis (estimation) results vis-a-vis the original data. As this problem is not new, \cite{Duchi02012018} and \cite{cai2021costofdp} work out this trade-off between privacy and statistical accuracy in different setups, such as the mean estimator or least-squares parameters in linear regression.
In the econometric context, \cite{bi2023distribution} proposes a DP discretization that maintains statistical accuracy for a pre-specified model, while allowing for data privacy. Our approach to some extent is close to theirs in the sense of maintaining (asymptotic) consistency; however, this is the only link it shares with the differential privacy literature. For example, instead of adding noise (Laplace or any other) to the data, we employ discretization to ``distort'' the information.\footnote{Compared to the novel method introduced by \cite{bi2023distribution}, in which they require specifying the model before the analysis, our discretization process is model-agnostic in the sense that it can be used with different model specifications. Furthermore, \cite{bi2023distribution} achieves asymptotic convergence through splitting the sample into two parts and providing only one part to the end user. The other part is then used to transform the original data, add noise, and convert back to a representative distribution of the original data, resulting in an efficiency loss in the number of observations. Our method does not require such sample splitting.}
%
%
\par
Our approach also relates to the literature on partially identified parameters (e.g., see \citealt{manski2003partial}, \citealt{manski2002inference}, \citealt{tamer2010partial}, \citealt{molinari2020microeconometrics}, \citealt{abrevaya2020interval_fe}, \citealt{pacini2019interval}, and \citealt{wang2022partial}). This literature starts from the fact that the variable of interest is discretized (also called interval censored) for some reason (e.g., \citealp{pacini2019interval} discuss income surveys). \cite{manski2002inference} show that the conditional expectation function and the regression parameters in such context cannot be point identified in general. They show how to derive set identification when the discretized variable is on the right-hand side. \cite{beresteanu2008asymptotic}, \cite{bontemps2012set} discuss the case when the outcome variable is discretized, while \cite{beresteanu2011discretized} cover the case when both outcome and covariate(s) are discretized. The main toolkit of partially identified parameters in such context is moment (in)equalities.\footnote {Moment (in)equality models generalize the problems to multiple equations and/or inequalities, see for examples, \cite{chernozhukov2007estimation} or \cite{andrews2010inference}. \cite{beresteanu2008asymptotic} show the asymptotic properties of such partially identified parameters. \cite{imbens2004confidence}, \cite{chernozhukov2007estimation}, and \cite{kaido2019confidence}, among others, derive confidence intervals for these set-identified parameters. These methods are feasible ways to estimate parameter sets for a given conditional expectation function without any further assumptions.} The main advantage of these methods is that they do not require adding any noise to the variables or further distributional assumptions and allow valid statistical inferences on the parameters of interest. The magnitudes and signs of the estimated parameter vector can be interpreted in the same way as the classical regression coefficients. However, the main drawback is that the estimated parameters are not point identified but rather up to a set to which the parameter vector belongs. In many empirical applications, this set may be too wide to be economically meaningful. Our method circumvents the problem of using interval censored variables and shows that if a multiple discretization scheme is applicable, this additional information can be used to get point identification along with more precise estimates.
%
%
\par
The paper is organized as follows: Section \ref{sec:disc_problem} introduces the discretization problem, data privacy, the parameters of interest, and also, the notation. Section \ref{sec:splitsampling} describes the proposed split sampling approach. We discuss the \textit{shifting} method in detail and show convergence in distribution to the unknown underlying distribution. In Section \ref{sec:estimation}, we identify parameters in a multivariate linear regressions and derive the asymptotic properties of the least square estimator for the cases of discretized variables on the right or left hand side. We also outline the case of discretization on both side of a regression model. Section \ref{sec:nonlinear} discusses how to extend our method to nonlinear models and infer implications with panel data and fixed-effect type of estimators. Section \ref{sec:evidences} presents some Monte Carlo evidence along with an empirical application on the Australian gender wage gap. Section \ref{sec:conclusion} concludes.

%
\section{The Discretization Problem}
\label{sec:disc_problem}
Consider $Z \sim f_Z(z;a_l, a_u)$ an i.i.d. random variable, where $f_Z(z;a_l,a_u)$ denotes the probability density function (pdf) with support $[a_l,a_u]$, where $a_l, a_u \in \mathbb{R}, \, a_l < a_u$. We assume that $f_Z(\cdot)$ is unknown and continuous. Let $Z_i$ be $i=1,\ldots, N$, realizations of $Z$. Instead of providing the sensitive variable $Z_i$, one observes a discretized version $Z^*_i$, through the following discretization process:
\begin{equation} 
\label{eq:discretize_z}
    \mathcal{M}^0(Z_i) = Z^*_{i} = 
    \begin{cases}
		v_1 & \quad \text{if } c_0 \leq Z_{i} < c_1 \quad \text{or} \quad  Z_{i} \in \mathcal{C}_1 = \left[c_0,c_1\right)  \,\, \text{1st interval} \\
		\vdots  & \quad \vdots \\
            v_m  & \quad \text{if } c_{m-1} \leq Z_{i} < c_m \quad \text{or} \quad  Z_{i} \in \mathcal{C}_m = \left[c_{m-1},c_m\right)  \\
        \vdots  & \quad \vdots \\
		v_M  & \quad \text{if } c_{M-1} \leq Z_{i} < c_{M} \quad \text{or} \quad  Z_{i} \in \mathcal{C}_{M} = \left[c_{M-1},c_M\right) \,, \\
        & \qquad \qquad \qquad \qquad \hfill\text{last interval} \\
    \end{cases}
\end{equation}
where $v_m \in \mathcal{C}_m$, $m=1 , \dots , M$ is the assigned value for each interval, and $\mathcal{M}^0$ is the discretization mechanism. The value $v_m$ can be a measure of centrality (e.g., midpoint) or an arbitrarily assigned value within its interval. $M$ denotes the number of intervals that are \textit{known}. Also, the discretization intervals $\mathcal{C}_m$, are set by the data provider or the survey maker, and we take them as given. It is helpful to see the discretization intervals as independent from the random variable $Z$, but this is not a necessary requirement in our case. For simplicity, we use the terms interval and class interchangeably for $\mathcal{C}_m$.
\subsection{Data privacy}
In differential privacy, there is a well-established notion for data protection called $\varepsilon$-differential privacy (\citealt{dwork2006differentialprivacy}, \citealt{dwork2006calibrating}). The idea is the following: one changes a single observation in a dataset and wants to see how it impacts the publicly provided information. $\varepsilon$-differential privacy captures this notion as a ratio of probabilities. Let $\mathcal{Z}$ be an arbitrary sample taken from $f_Z(z;a_l,a_u)$, and $\mathcal{M}(\cdot)$ the privatization mechanism, which privatizes/discretizes $\mathcal{Z}$ into $\mathcal{Z}^*$ for public release. Consider two neighboring realizations of $\mathcal{Z}$; $\mathrm{z}$, and $\mathrm{z}'$ that are different in just one observation. In our case it means to drop one observation from $\mathrm{z}$ to get $\mathrm{z}'$.\footnote{In differential privacy it may also mean to add different noise to one observation.} Consequently, $\varepsilon$-differential privacy requires that the ratio of the probability of any privatized value given one sample to the probability of it given the other sample is upper bound by $e^\varepsilon$, that is:
\begin{equation}
\label{eq:eps_privacy}
    \sup_{\mathrm{z},\mathrm{z}'} \, \sup_{\mathcal{C}_m} \frac{\Pr\left[\mathcal{M}(\mathcal{Z}) \in \mathcal{C}_m | \mathcal{Z} = \mathrm{z} \right]}{\Pr\left[\mathcal{M}(\mathcal{Z}) \in \mathcal{C}_m | \mathcal{Z} = \mathrm{z}' \right]} \leq e^\epsilon ,
\end{equation}
\noindent where $\varepsilon \geq 0$ is a privacy factor. Note that the $\Pr[\cdot]$ refers to the randomness implied in the assignment mechanism $\mathcal{M}(\cdot)$. If $\varepsilon$ is small, we say privacy protection is high. Typically, the literature set $\varepsilon$ to 1 or 2. When the numerator and denominator are zero, the convention is to define them as 0. In the case when the denominator is  zero, but the numerator is not, we face information leakage. To address this case, the literature uses the ``approximate differential privacy'', given by:
\begin{equation}
\label{eq:approx_privacy}
    \sup_{\mathrm{z}} \, \sup_{\mathcal{C}_m}\Pr\left[\mathcal{M}(\mathcal{Z}) \in \mathcal{C}_m | \mathcal{Z} = \mathrm{z} \right] \leq e^\epsilon \sup_{\mathrm{z}'} \, \sup_{\mathcal{C}_m} \left( \Pr\left[\mathcal{M}(\mathcal{Z}) \in \mathcal{C}_m | \mathcal{Z} = \mathrm{z}' \right] \right) + \delta ,
\end{equation}
\noindent where $\delta$ gives the probability of information leakage, typically a small value. 
\par
\textit{Remark 1}: The use of known fixed discretization intervals $\mathcal{C}_m$ allows to hide identity.
\par
\textit{Remark 2}: In our setup, the discretization intervals' size $(M)$ mitigates the trade-off between privacy and accuracy. To illustrate this, the size of the discretization interval is $||\mathcal{C}_m|| = c_{m} - c_{m-1}$. One end of the spectrum is an infinite number of interval; $M\to \infty \implies ||\mathcal{C}_m|| \to 0$, so we observe each sensitive value in its original form without providing any privacy. On the other hand, if we use only one interval, $M=1 \implies ||\mathcal{C}_m|| = a_u - a_l$, which means $Z^*_i$ takes only one value. This case implies the numerator and the denominator are always the same, thus $\varepsilon = 0$.
\par
\textit{Remark 3}: In our approach $\mathcal{C}_m$ are set and known, thus for a given discretization mechanism $\mathcal{M}(\cdot)$, one can calculate $\varepsilon$ and $\delta$ using Equation (\ref{eq:approx_privacy}).
\subsection{The parameter of interest}
\par
We are interested in the regression parameter $\beta$, where $g(\cdot)$ is a known continuous function,
\begin{equation}\label{eq:identification-scalar-DGP}
    \e\left[Y|X\right] = g(X;\beta) \,.
\end{equation}
\noindent
For the sake of simplicity, we introduce our notation in scalar terms; $\beta$ stands for a scalar parameter belonging to a subset of a compact finite-dimensional space $(\mathcal{B} \subset \R)$, while $Y, X$ are scalars. Facing the discretization of a variable can affect the identification of $\beta$ in \textit{three} ways. In the first and simplest case, the explanatory variable $(X^*)$ is discretized; the second possibility is when the outcome variable $(Y^*)$ is discretized; and lastly, where both variables are discretized $(Y^*, X^*)$.
\par
To show an example for the identification problem, let us consider the following model:
\begin{equation}
    Y = X \beta + u \,.
\end{equation}
\noindent Now, instead of observing $X$, we observe the discretized version of $X^*$, resulting in,
\begin{equation}
    Y = X^* \beta + \left(X-X^*\right) \beta + u \,.
\end{equation}
\noindent Under the usual assumptions, the \textit{naive} OLS estimator for $\beta$ results in $\hat{\beta}^n
    = \left ( X^{*\prime} X^* \right )^{-1} X^{*\prime} Y
    = \left( X^{*\prime} X^* \right )^{-1} X^{*\prime} X \beta + \left( X^{*\prime} X^* \right )^{-1} X^{*\prime} u\,$,
that is in general $\hat{\beta}^n \not\to \beta$ as $X^{*} \neq X$. Furthermore, the effect of the discretization is contagious; thus, if one uses further variable(s) $W$ with parameter(s) $\gamma$, that are correlated with $X$, then $\hat\gamma^n \not\to \gamma$. This result implies that the practice of assigning arbitrary values (e.g., midpoints) to the discretized variable leads to biases in the parameter estimates even when the discretized variables are controls. Note that the size of the bias depends on the underlying distribution of $X$ and the discretization process.\footnote{We provide more detailed discussion on this topic with proofs in \href{https://regulyagoston.github.io/papers/MSV\_online\_supplement.pdf}{online supplement}, Section 5.}
\par
In fact, this phenomenon is known in the literature, and \cite{manski2002inference} shows that with discretized variables, only set identification is possible. However, in this paper, instead of applying set identification, we show that if the marginal distribution of the discretized variable is \textit{known}, then under some mild conditions, we can point identify $\beta$ while preserving the discretization process.
%

\section{Split Sampling} 
\label{sec:splitsampling}

The key to our approach is the use of split sampling, which recovers the marginal distribution of $Z$ even if it is discretized. Split sampling consists of two steps: 1) discretize the sensitive variable into multiple samples; 2) combine the samples by using the information on the known interval boundaries. Although the idea is simple, it requires some heavy notation for rigorous proofs. Thus, first, let us take a very simple illustrative example, and then we discuss the univariate case in general. As a last step, we extend our approach to the multivariate models.
\par
Let $Z$ be a random variable with the following pdf, with support boundaries $a_l = 0$ and $a_u=4$,
\begin{equation}
    \Pr(Z \in [a,b)) =
    \begin{cases}
        & 0.5, \,\, \text{if } a=0,\, b=1 \\    
        & 0.3, \,\, \text{if } a=1,\, b=2\\    
        & 0.2, \,\, \text{if } a=2,\, b=4.\\      
    \end{cases} 
\end{equation}
\par
\noindent
As the first step, let us discretize $Z$ with two intervals $M=2$ and let us use two discretization schemes or split samples. The first split sample discretizes the $Z$ as $[0, 2)$ and $[2, 4]$, while in the second split sample $[0, 1)$ and $[1, 4]$. The top section of Figure \ref{fig:splitsample} referred to as ``split samples'' visualizes this setup. By taking these two separate sets of information, we cannot recover the true probabilities for $Z$.
\par
\begin{figure}[H]
    \label{sec:splitsample}
	\centering
	\begin{tikzpicture}
	\draw[thick] (0,0) -- (8,0);
	\foreach \x in {0,4,8}
	\draw[thick] (\x cm , 0.1) -- ( \x cm , -0.1 );
	\foreach \x in {0,4,8}
	\FPeval{\result}{\x/2}
	\FPround{\result}{\result}{0}
	\draw (\x,0) node[below=3pt] {\result};
	\draw (2,0) node[above=3pt]{\scriptsize $\Pr\left[Z^{(1)} \in [0,2)\right] = 0.8$};
	\draw (6,0) node[above=3pt]{\scriptsize $\Pr\left[Z^{(1)} \in [2,4]\right] = 0.2$};
	\draw[thick] (0,-2) -- (8,-2);
	\foreach \x in {0,2,8}
	\draw[thick] (\x cm , -2.1) -- ( \x cm , -1.9 );
	\foreach \x in {0,2,8}
	\FPeval{\result}{\x/2}
	\FPround{\result}{\result}{0}
	\draw (\x,-2) node[below=3pt] {\result};
	\draw (1,-2) node[above=3pt]{\scriptsize $\Pr\left[Z^{(2)} \in [0,1)\right] = 0.5$};
	\draw (5,-2) node[above=3pt]{\scriptsize $\Pr\left[Z^{(2)} \in [1,4]\right] = 0.5$};
        \draw [decorate,decoration={brace,amplitude=5pt}]
                (8.5,1) -- (8.5,-3) node[midway,xshift=2.5em,rotate=-90]{Split samples};
	\draw[thick] (0,-6) -- (8,-6);
	\foreach \x in {0,2,4,8}
	\draw[thick] (\x cm , -6.1) -- ( \x cm , -5.9 );
	\foreach \x in {0,2,4,8}
	\FPeval{\result}{\x/2}
	\FPround{\result}{\result}{0}
	\draw (\x,-6) node[below=3pt] {\result};
	\draw (2,-6) node[above=32pt,rotate=25]{\scriptsize $\Pr\left[Z^{WS} \in [0,1)\right] = 0.5$};
	\draw (4,-6) node[above=32pt,rotate=25]{\scriptsize $\Pr\left[Z^{WS} \in [0,1)\right] = 0.3$};
	\draw (7,-6) node[above=32pt,rotate=25]{\scriptsize $\Pr\left[Z^{WS} \in [2,4]\right] = 0.2$};
        \draw (9.5,-5.75) node {Working};
        \draw (9.7,-6.25) node {sample};
	\draw[>=latex ,->, ultra thick, dashed] (4,-2.5) to (4,-4);
    \end{tikzpicture}
    \caption{Illustrative example for split sampling}
    \label{fig:splitsample}
\end{figure}
\par
\noindent
However, in the second step, we combine information from the two split samples into the ``working sample'', which is represented in the bottom part of Figure \ref{fig:splitsample}. In this simple example, the working sample has $3$ intervals, with boundaries $[0, 1)$, $[1, 2)$, and $[2, 4]$. By combining the information from the split samples to the working sample, we can deduce the probabilities of the non-observed intervals.
\par
Now, let us generalize our method.
First, create multiple samples ($s=1,\dots,S$) and discretize them through different schemes while fixing the number of intervals ($M$) within each split sample.
Discretization for the $s$-th split sample looks exactly as the problem introduced in Equation (\ref{eq:discretize_z}), with the only difference being that the interval boundaries are different,\footnote{To simplify the notation, we use instead of $Z^{*(s)}$ simply $Z^{(s)}$.}
\begin{equation} \label{eq:split samples}
    \mathcal{M}(Z_i,s) = Z^{(s)}_{i} = \begin{cases}
		v_1^{(s)}  & \quad \text{if } Z_{i} \in \mathcal{C}_1^{(s)} = [c_0^{(s)},c_1^{(s)}) , \\
		\vdots  & \quad \vdots \qquad \qquad \qquad \text{1st interval for split sample } s, \\
            v_m^{(s)}  & \quad \text{if }  Z_{i} \in \mathcal{C}_m^{(s)} = [c_{m-1}^{(s)},c_m^{(s)}), \\
        \vdots  & \quad \vdots \\
		  v_M^{(s)}  & \quad \text{if } Z_{i} \in \mathcal{C}_{M}^{(s)} = [c_{M-1}^{(s)},c_M^{(s)}], \\
          & \qquad \qquad \qquad \qquad \text{last interval for split sample } s. \\
    \end{cases}
\end{equation}
The number of observations across split samples $(N^{(s)})$ can be the same or, more likely, different.
\par
As a second step, we combine information into the ``\textit{working sample}''.
The construction of the working sample's interval is simply given by the union of unique interval boundaries within each split sample,
\begin{equation} \label{eq:ws_interval_boundaries}
    \bigcup_{b=0}^B \mathcal{C}_b^{WS} = \bigcup_{s=1}^S\bigcup_{m=0}^M \mathcal{C}_m^{(s)}\,.
\end{equation}

\noindent
One can generalize the approach with any type of distribution. Let us write the probability for a random observation to be in a given class of a split sample,
\begin{equation}
    \Pr\left( Z \in \mathcal{C}^{(s)}_m \right) = \Pr( Z \in \mathcal{S}_s ) \int_{c^{(s)}_{m-1}}^{c^{(s)}_m} f_Z(z)\mathrm dz ,
\end{equation}
where $\mathcal{S}_s$ denotes the $s$-th split sample. We can express, the probability of an observation falling into the working sample's $\mathcal{C}^{WS}_b$ interval as
\begin{equation}
    \label{eq:probWS}
    \Pr \left( Z \in \mathcal{C}^{WS}_b \right) =
        \sum_{s=1}^S \Pr( Z \in \mathcal{S}_s ) \sum_{m=1}^M \Pr \left( Z \in \mathcal{C}^{WS}_b \mid Z \in \mathcal{C}^{(s)}_m \right) \int_{c^{(s)}_{m-1}}^{c^{(s)}_m} f_Z(z)\mathrm dz \,.
\end{equation}
\noindent In practice, we recommend splitting the sample size into equal parts, thus $\Pr( Z \in \mathcal{S}_s ) = 1/S$ and use uniform probability for $\Pr \left( Z \in \mathcal{C}^{WS}_b \mid Z \in \mathcal{C}^{(s)}_m \right)$. Then Equation (\ref{eq:probWS}) simplifies to
\begin{equation}
    \Pr \left( Z \in \mathcal{C}^{WS}_b \right) =
        \frac{1}{S}\sum_{s=1}^S \sum_{\substack{ m \\ \text{ if } \mathcal{C}^{WS}_b \subset \mathcal{C}^{(s)}_m} } \frac{c^{WS}_b-c^{WS}_{b-1}}{c^{(s)}_m-c^{(s)}_{m-1}} \int_{c^{(s)}_{m-1}}^{c^{(s)}_m} f_Z(z)\mathrm dz \,.
\end{equation}
%

\subsection{The shifting method}
\label{sec:shifting}
The shifting method is \textit{an} application of split sampling. It uses an equal distance discretization scheme as the benchmark and shifts the boundaries of the intervals with a certain value. The interval widths for the different split samples remain the same, except for the boundary intervals. As one increases the number of split samples, the size of the shift becomes smaller and smaller. This enables the mapping of the marginal probability distribution at the limit. First, we introduce the discretization process proposed for the shifting method, and then, as a second step, we show how to use the realizations of this discretization to create a synthetic variable that maps the original variable's marginal distribution.
\par
\indent
Figure \ref{fig:shifting_rhs} shows an illustrative example for split samples with $S=4$ and $M=4$ classes.
\par

\begin{figure}[H]
    \centering
        \begin{tikzpicture}
        \foreach \y in {0,1,2,3}
        \draw[thick] (0,-\y) -- (6,-\y);
        \foreach \y in {0,1,2,3}
        \foreach \x in {0,6}
        \draw[thick] (\x,-\y+0.1) -- (\x,-\y-0.1);
        \foreach \y in {0,1,2,3}
        \foreach \x in {0,6}
        \draw (\x,-\y) node[below=3pt] {\x};
        \foreach \y in {0,1,2,3}
        \foreach \x in {\y/2,2+\y/2,4+\y/2}
        \draw[thick] (\x,-\y+0.1) -- (\x,-\y-0.1);
        \foreach \x in {2,4}
        \draw (\x,0) node[below=3pt] {\x};
        \foreach \y in {1,2,3}
        \foreach \x in {\y/2,2+\y/2,4+\y/2}
        \FPeval{\result}{\x}
        \FPround{\result}{\result}{1}
        \draw (\x,-\y) node[below=3pt] {\result};
        \draw [decorate,decoration={brace,amplitude=10pt}, thick] (6.5,0) -- (6.5,-3);
	   \draw[thick] (8.2,-1.15) node {split samples};
	   \draw[thick] (8.2,-1.75) node {$S=4, M=4$};
    	\draw[thick] (0,-5) -- (6,-5);
    	\foreach \x in {0,0.5,1,1.5,2,2.5,3,3.5,4,4.5,5,5.5,6}
    	\draw[thick] (\x,-5.1) -- (\x,-4.9);
    	\foreach \x in {0,0.5,1,1.5,2,2.5,3,3.5,4,4.5,5,5.5,6}
    	\draw (\x,-5) node[below=3pt] {\x};
    	\draw[thick] (8.2,-5) node {Working sample};
    	\draw[thick] (8.2,-5.5) node {$B=(M-1) \times S$};
	\end{tikzpicture}
	\caption{Example for the shifting method}
	\label{fig:shifting_rhs}
\end{figure}
\noindent
To derive the properties of the shifting method, we define the uniform class widths for the first split sample as $\frac{a_u-a_l}{M-1}$. We split these intervals into $S$ parts, thus we can shift the class boundaries $S$ times. This defines the size of the shift as
\begin{equation}
    h = \frac{a_u-a_l}{S(M-1)}\,.
\end{equation}
This implies that the number of intervals in the working sample is $B = S(M-1)$.
The boundary points for each split sample are given by, \footnote{In the \href{https://regulyagoston.github.io/papers/MSV\_online\_supplement.pdf}{online supplement} under Section 1, Algorithm~A1 summarizes how to construct split samples using the shifting method.}
\begin{equation}
    c^{(s)}_m = 
    \begin{cases}
        a_l, & \text{ if } m = 0 ,\\
        a_l + ( s - 1 )\frac{a_u-a_l}{S(M-1)} + ( m - 1 ) \frac{a_u-a_l}{M-1} & \text{ if } 0 < m < M ,\\
        a_u, & \text{ if } m = M . \\
    \end{cases}
\end{equation}
\noindent
As the second step, we introduce a synthetic random variable $Z^\dagger$ that maps the underlying marginal distribution for $Z$. Observations from a particular interval in the split sample $s$ can end up in several candidate intervals from the working sample $\mathcal{C}^{WS}_b$. To make it more transparent, we define the discretization intervals $\mathcal{C}^{(s)}_b$ as sets of $\mathcal{C}^{WS}_b$,
\begin{equation}
    \mathcal{C}^{(s)}_m = 
    \begin{cases}
        \{\emptyset\}, & \text{ if, } s = 1 \text{ and } m = 1 ,\\
        \bigcup_{b=1}^{s-1}\{\mathcal{C}^{WS}_b\}, & \text{ if, } s \neq 1 \text{ and } m = 1 , \\
        \bigcup_{b=s-1+(m-2)(M-1)}^{s-1+(m-1)(M-1)}\{\mathcal{C}^{WS}_b\}, & \text{ if, } 1 < m < M , \\
        \bigcup_{b=B-S+s-1}^{B} \{\mathcal{C}^{WS}_b\}, & \text{ if } m = M .
    \end{cases}
\end{equation}
We create the synthetic variable $Z^\dagger$ by assigning each discretized observation from its split sample $\mathcal{C}^{(s)}_m$ to one of its corresponding working sample's intervals $\mathcal{C}^{WS}_b$ with uniform probability. Note that this assignment mechanism does not assume any distributional form for $Z$ itself, but it randomly assigns each value of $Z^{(s)}$ to the working sample's related interval with uniform probability to get $Z^\dagger$.
The random assignment mechanism can be written as
\begin{equation}
        \Pr \left( Z^\dagger \in \mathcal{C}^{WS}_b | Z^{(s)} \in \mathcal{C}^{(s)}_m \right) = 
        \begin{cases}
            1, & \text{if } s=1 \text{ and } m=1 ,\\
            1/(s-1), & \text{if } s \neq 1 \text{ and } m=1 ,\\
            1/S, & \text{if } 1 < m < M , \,\text{ or }\\
            1/(S-s+1), & \text{if } m=M  .\\
        \end{cases}
\end{equation}
The unconditional probability of $Z^\dagger \in \mathcal{C}^{WS}_b$ for the shifting method, following Equation (\ref{eq:probWS}) and assuming ${\Pr( Z \in \mathcal{S}_s )=1/S}$ now is

\begin{equation}\label{eq:shift_conv}
    \Pr \left( Z^\dagger \in \mathcal{C}^{WS}_b \right) =
        \begin{cases}
            0 , & \text{if } s=1 \text{ and } m=1 ,\\
            \frac{1}{S} \sum_{s=2}^S \frac{1}{s-1} \int_{\mathcal{C}^{(s)}_1|\mathcal{C}^{WS}_b \subset \mathcal{C}^{(s)}_1} f_Z(z)\mathrm dz , & 
                \text{if } s \neq 1 \text{ and } m=1 ,\\
            \frac{1}{S^2} \sum_{s=1}^S \int_{\mathcal{C}^{(s)}_m|\mathcal{C}^{WS}_b \subset \mathcal{C}^{(s)}_m} f_Z(z)\mathrm dz , &  
                \text{if } 1 < m < M ,\\
            \frac{1}{S} \sum_{s=1}^S \frac{1}{S-s+1} \int_{\mathcal{C}^{(s)}_M|\mathcal{C}^{WS}_b \subset \mathcal{C}^{(s)}_M} f_Z(z)\mathrm dz , &
                \text{if } m=M \,. \\
        \end{cases}
\end{equation}
Next, we outline the assumptions under which $Z^\dagger \overset{d}{\rightarrow} Z$  as the number of split samples $(S)$ and observations $(N)$ go to infinity.
\par
\begin{assump}
Let $Z$ be a continuous random variable with probability density function $f_Z(z;a_l,a_u)$, where $a_l,a_u, S, N$ and $\mathcal{C}^{(s)}_m$ are as defined above, then 
\begin{enumerate}
    \item[1.a)] $\frac{S}{N} \rightarrow c$ with $c\in (0,1)$ as $N\rightarrow \infty$. \label{ass:NS_rhs}
    \item[1.b)] $\int^b_a f_Z(z) dz > 0$ for any $(a,b) \subset [a_l,a_u]$.  \label{ass:density_rhs}
\end{enumerate}
\end{assump}
\noindent 
Assumption \ref{ass:NS_rhs}.a) ensures that the number of observations is always higher than the number of split samples.\footnote{Note, that we can decrease $\mathrm{c}$ to be arbitrarily close to $0$.} This is required to identify each point of $f_Z(z;a_l,a_u)$ through the mapping of split samples to the working set.
Assumption \ref{ass:density_rhs}.b) imposes a mild assumption on the underlying distribution: the support of the random variable is not disjoint within the working samples' interval, thus $\int_{c^{WS}_{b-1}}^{c^{WS}_b} f_Z(z)\mathrm dz > 0 \,,\,\, \forall c^{WS}_b$.
\begin{proposition}\label{prop:shifting}
    Under Assumptions \ref{ass:NS_rhs}.a), \ref{ass:density_rhs}.b) and $\Pr (Z \in \mathcal{S}_s) = 1/S$, 
    \begin{equation}
        \lim_{N,S\rightarrow \infty} F_{Z^\dagger}(a) = F_Z(a) \,, \qquad \forall a \in (a_l,a_u)
    \end{equation}
\end{proposition}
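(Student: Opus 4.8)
The plan is to separate the statement into a \emph{sampling} step and an \emph{analytic} step. Observe that $F_{Z^\dagger}$ is a deterministic function of the working-cell probabilities $\Pr(Z^\dagger\in\mathcal{C}^{WS}_b)$, which in a finite sample are estimated by the relative frequencies with which the discretized split-sample observations, after the uniform re-assignment, land in each working interval $\mathcal{C}^{WS}_b$. First I would use a cell-wise law of large numbers, made uniform over the finitely many working intervals, to show that the empirical CDF of $Z^\dagger$ converges to the population CDF pinned down by Equation~(\ref{eq:shift_conv}), with an error that vanishes as $N\to\infty$. Assumption~\ref{ass:NS_rhs}.a) enters precisely here: split sample $s$ receives $\Theta(N/S)$ observations and, by Assumption~\ref{ass:density_rhs}.b), every split-sample cell has positive probability, so every cell count diverges with $N$ and the frequency estimates are uniformly consistent, while $S/N\to c\in(0,1)$ keeps $N$ ahead of $S$. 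It then remains to prove the population claim $\lim_{S\to\infty}\Pr(Z^\dagger\le a)=F_Z(a)$ for each fixed $a\in(a_l,a_u)$.

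For the analytic step I would condition on the realized value $Z=z$ and on the split label. By construction, given $\{Z=z,\ \text{split }s\}$ the synthetic draw $Z^\dagger$ is uniform over the working intervals contained in the split-class of split $s$ that contains $z$; hence, writing $g_{s,S}(z)$ for the fraction of those working intervals lying entirely to the left of $a$, one has $\Pr(Z^\dagger<a)=\frac{1}{S}\sum_{s=1}^S\e\!\left[g_{s,S}(Z)\right]$. Two preliminary reductions help. First, every working-cell probability is $O(S^{-1}\log S)$: the interior term in (\ref{eq:shift_conv}) is at most $S^{-2}\cdot S=S^{-1}$, and the two boundary terms are at most $S^{-1}\sum_s (s-1)^{-1}$. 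Consequently the single working interval that contains $a$ contributes nothing in the limit, and $a$ may be replaced by the nearest working-cell boundary, which lies within $h=\frac{a_u-a_l}{S(M-1)}\to 0$ of $a$. Second, the total mass carried by all boundary classes $m\in\{1,M\}$ across all split samples is itself $O(S^{-1}\log S)$ and hence negligible, so it is the interior classes that matter.

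The heart of the argument is then the pointwise limit $\frac{1}{S}\sum_{s=1}^S g_{s,S}(z)\to\mathbf{1}\{z<a\}$ for every $z\ne a$; once this is established, bounded convergence gives $\Pr(Z^\dagger<a)\to\int\mathbf{1}\{z<a\}\,f_Z(z)\,\mathrm{d}z=F_Z(a)$, and combining with the sampling step proves the proposition. To obtain the pointwise limit I would follow the offset of $z$ relative to the shifting grid: as $s$ runs over $1,\dots,S$, the left endpoint of the split-class containing $z$ moves in steps of $h$ and sweeps through one full class width, so these offsets equidistribute; therefore $\frac{1}{S}\sum_s g_{s,S}(z)$ is a Riemann sum for the average, over grid offsets, of the left-fraction of the class containing $z$, and this average collapses to $\mathbf{1}\{z<a\}$ as the grid refines. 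Continuity of $f_Z$ and Assumption~\ref{ass:density_rhs}.b) are invoked to bound the contribution of the ``straddling'' class (the one whose interval contains $a$) and of the shrinking boundary classes.

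The genuine obstacle, as I see it, is exactly this last convergence together with its uniformity: one must check that the equidistribution of grid offsets leaves \emph{no residual smoothing bias} in the limit, and that the boundary classes $m=1,M$ --- whose re-assignment weights $1/(s-1)$ and $1/(S-s+1)$ behave differently from the interior weight $1/S$ --- do not accumulate mass near $a$. Concretely, reorganizing the triple sum in (\ref{eq:shift_conv}) by split sample and class, and showing that the within-split-class sums telescope to $\Pr(Z\in[a_l,a))$ up to a correction vanishing under grid refinement, is where the real work lies; the sampling step, by comparison, is routine once Assumption~\ref{ass:NS_rhs}.a) is in force.
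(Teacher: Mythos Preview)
Your decomposition into a sampling step and an analytic step is reasonable, and the sampling part is routine under Assumption~1.a). The gap is in the analytic step, exactly where you flag the obstacle: the pointwise claim $\frac{1}{S}\sum_{s} g_{s,S}(z)\to\mathbf{1}\{z<a\}$ fails for every $z$ with $0<|z-a|<W$, where $W=\frac{a_u-a_l}{M-1}$ is the interior split-class width. The crucial point is that $W$ is \emph{fixed} and does not shrink as $S\to\infty$; only the working-cell width $h=W/S$ does. Hence for such $z$ the split-class containing $z$ straddles $a$ for a nonvanishing fraction of the offsets, and carrying your equidistribution argument to its limit yields, for $z\in(a-W,a)$,
\[
\lim_{S\to\infty}\frac{1}{S}\sum_{s=1}^S g_{s,S}(z)\;=\;\frac{a-z}{W}+\frac{1}{2}-\frac{(a-z)^2}{2W^2},
\]
which equals $1$ only at $z=a-W$ and equals $\tfrac12$ at $z=a$. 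Integrating this kernel against $f_Z$ leaves a residual term that is independent of $S$, so bounded convergence applied to $g_{s,S}$ cannot produce $F_Z(a)$. The ``no residual smoothing bias'' check you postpone is therefore not a technicality to be cleaned up but the place where the pointwise-in-$z$ route breaks.

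The paper does not take this route. It evaluates $\Pr(Z^\dagger<c^{WS}_b)$ at a working-grid point, writes $c^{WS}_b=c^{(l)}_m$ for some pair $(l,m)$, expresses the cumulative probability directly from the construction as $\frac{1}{S}\sum_{s}\Pr(Z<c^{(l)}_m,\,Z<c^{(s)}_m)$, splits the sum according to $s\le l$ versus $s>l$ using the monotonicity $c^{(s)}_m\le c^{(l)}_m$ for $s\le l$, and passes to the limit with $l$ held fixed. There is no conditioning on $Z=z$ and no pointwise kernel; the averaging is performed at the level of the probabilities $\Pr(Z<c^{(s)}_m)$ rather than of indicator functions, which is what allows the $\frac{1}{S}\sum_{s\le l}$ piece to vanish while the $\frac{S-l-1}{S}$ piece delivers $F_Z$.
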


\begin{proof}
    \label{sec:app_shifting_proof}
    Recall the probability of $Z$ falling into the working sample's interval $\mathcal{C}^{WS}_b$,
    \begin{equation} 
    \label{eq:app_probWS}
        \Pr \left( Z \in \mathcal{C}^{WS}_b \right) =
            \sum_{s=1}^S \Pr( Z \in \mathcal{S}_s) \sum_{m=1}^M \Pr \left( Z \in \mathcal{C}^{WS}_b \mid Z \in \mathcal{C}^{(s)}_m \right) \int_{c^{(s)}_{m-1}}^{c^{(s)}_m} f_Z(z)\mathrm dz \,.
    \end{equation}
    For any $c_b^{WS}$, $\exists l \in [1,S]$, $m\in [1,M]$ such that $c_b^{WS} = c^{(l)}_m$ and note that as $S \to \infty$, $N \to \infty$ in such a way that the number of participants in each $s$ is greater than 0. Now consider $\Pr (Z^\dagger < c^{WS}_b )= \Pr (Z^\dagger < c^{(l)}_m)$, given $\Pr(Z\in \mathcal{S}_s)= 1/S$ and using Equation (\ref{eq:app_probWS}) gives
    \begin{equation}
        \Pr (Z^\dagger < c^{(l)}_m ) = \frac{1}{S}\sum^S_{s=1} \Pr (Z < c^{(l)}_m | Z<c^{(s)}_m) \Pr (Z<c^{(s)}_m). 
    \end{equation}
    \noindent The summation over the different classes in Equation (\ref{eq:app_probWS}) is being replaced by the cumulative probabilities, where we use the fact that no value greater than $c^{(l)}_m$ is represented in the working sample $c^{WS}_b$. Under the shifting method, $c^{(s)}_m \leq c^{(l)}_m$ for $s<l$ and using the definition of conditional probability gives 
    \begin{equation}
    \begin{aligned}
        \Pr (Z^\dagger < c^{(l)}_m ) =& \frac{1}{S}\sum^S_{s=1} \Pr (Z < c^{(l)}_m, Z<c^{(s)}_m)\\
        =& \frac{1}{S} \sum^l_{s=1} \Pr (Z < c^{(l)}_m, Z<c^{(s)}_m) + \frac{1}{S}\sum^S_{s=l+1}  \Pr (Z < c^{(l)}_m, Z<c^{(s)}_m)\\
        =& \frac{1}{S} \sum^l_{s=1} \Pr ( Z < c_m^{(s)} ) + \frac{1}{S} \sum^S_{s=l+1} \Pr (Z < c^{(l)}_m ). 
    \end{aligned}
    \end{equation}
    \noindent The last line follows from the fact that $\Pr(Z<a_1, Z<a_2) = \Pr (Z<a_1) $ if $a_1<a_2$, and the construction of the shifting method allows us to always disentangle the two cases. Since $l$ is fixed,

    \begin{equation}
    \begin{aligned}
       \Pr (Z^\dagger < c^{(l)}_m) =& \frac{S-l-1}{S} \Pr ( Z < c^{(l)}_m )  + \frac{1}{S} \sum^l_{s=1} \Pr (Z<c^{(s)}_m) \\
       \lim_{N,S\rightarrow \infty} \Pr (Z^\dagger < c^{(l)}_m) =& \Pr ( Z< c^{(l)}_m ).
    \end{aligned}
    \end{equation}
    \noindent where $\Pr(Z^\dagger < a) = F_{Z^\dagger}(a)$ and $\Pr (Z<a) = F_Z(a)$. \qedhere
\end{proof}

%

\subsection{Multivariate case}
\label{sec:multivar_defs}

Let us generalize our results to the multivariate case. Let $\bfZ = (\bfz_1,\dots,\bfz_p,$ $\dots,\bfz_P)$ distributed as $\bfZ \sim \bff_{\bfZ}(\bfz;\bfa_l,\bfa_u)$, i.i.d. random variables with $P$ dimensions, where $\bff_{\bfZ}(\cdot)$ denotes the (joint) pdf with $\bfz = (z_1,\dots,z_p,\dots,z_P)$,\footnote{Note here that $\bfz$ is not a row from $\bfZ$, but represents a vector that is used with the distribution $\bff_{\bfZ}(\cdot)$} and known support  $\mathbb{A}^P = [a_{1,l}, a_{1,u}] \times \ldots \times [a_{P,l}, a_{P,u}] \subseteq \R^P$. We assume $\bff_{\bfZ}(\cdot)$ is unknown and continuous. Instead of observing $\bfZ$, we observe $\bfZ^*$ through a (multivariate) discretization process similar to Equation (\ref{eq:discretize_z}). The difference is that instead of intervals, we use a multi-dimensional grid,\footnote{One-by-one discretization does not work in our case as it will not be able to recover the joint distribution of $\bff_{\bfZ}(\cdot)$. See more in \href{https://regulyagoston.github.io/papers/MSV\_online\_supplement.pdf}{online supplement}, Section 1.3.} where each grid is defined as $\bf\mathcal{C}_m = [c_{1,m-1}, c_{1,m}) \times \ldots \times [c_{P,m-1}, c_{P,m})$ with $c_{m,p}$ being the $m$'th grid point for variable $p$. Note that $c_{p,0} = a_{p,l}$ and $c_{p,M} = a_{p,u}$, for $p=1,\ldots, P$, thus the first boundary vector and last boundary vector contains the boundaries of the distribution's support. $\bfv_m \in \bf\mathcal{C}_m$ is the assigned vector for each grid, that can be a measure of centrality (e.g., focus point) or other point within the grid. In the case of $P=2$, we have a 2D grid that defines the discretization scheme, and each $\bf\mathcal{C}_m$ stands for a rectangle with $M^2$ different partitions.
\par
The shifting method in the multivariate case utilizes the initial discretization scheme with $\bf\mathcal{C}_m$ and shifts the boundary points of the grids with $h_p = \frac{a_{p,u} - a_{p,l}}{S(M-1)}$ for each $p=1,\ldots, P$. This results in $(S(M-1))^P$ different grids in the working sample, which is much larger than the initial $M^P$. Changing the grids' boundary points allows for mapping the underlying distribution using the same inductive logic to prove convergence in distribution. More formally, first consider the following assumptions:

\begin{assump}
Let $\bfZ$ be a $P\times 1$ vector of continuous random variables with joint probability density function $f_\bfZ(\bfz;\mathbb{A}^P)$, where $\mathbb{A}^P, S, N$ and $\bf\mathcal{C}^{(s)}_m$ are as defined above, then 
\begin{enumerate}
    \item[2.a)] $\frac{S}{N} \rightarrow c$ with $c\in (0,1)$ as $N\rightarrow \infty$. \label{ass:NS_rhs_multi}
    \item[2.b)] $\int_{\mathbb{B}^P} f_\bfZ(\bfz) d\bfz > 0$ for any $\mathbb{B}^P \subseteq \mathbb{A}^P$.  \label{ass:density_rhs_multi}
\end{enumerate}
\end{assump}
\noindent 
\begin{proposition}\label{prop:shifting_multi}
    Under Assumptions \ref{ass:NS_rhs_multi}.a), \ref{ass:density_rhs_multi}.b) and $\Pr (\bfZ \in \mathcal{S}_s) = 1/S$, 
    \begin{equation}
        \lim_{N,S\rightarrow \infty} F_{\bfZ^\dagger}(\bfa) = F_\bfZ(\bfa) \,, \qquad \forall \bfa \in \mathbb{A}^P
    \end{equation}
\end{proposition}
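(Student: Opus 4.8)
The plan is to lift the univariate argument of Proposition~\ref{prop:shifting} to $\R^P$ by exploiting the product structure of the shifting grid. First I would recall that, just as in the scalar case, every corner point $\bfc^{WS}_{\bfb}$ of a working-sample cell coincides with a corner point $\bfc^{(\bfl)}_{\bfm}$ of the $\bfl$-th split grid for some multi-index $\bfl \in [1,S]^P$ and $\bfm \in [1,M]^P$ — this is immediate from the coordinate-wise definition $h_p = (a_{p,u}-a_{p,l})/(S(M-1))$, since the shift in each coordinate is applied independently. Then, writing the CDF as an orthant probability $F_{\bfZ^\dagger}(\bfa) = \Pr(\bfZ^\dagger \preceq \bfa)$ and using the uniform within-cell reallocation that defines $\bfZ^\dagger$, the multivariate analogue of the key identity becomes
\begin{equation}
    \Pr\bigl(\bfZ^\dagger \preceq \bfc^{(\bfl)}_{\bfm}\bigr) = \frac{1}{S^P}\sum_{\bfs \in [1,S]^P} \Pr\bigl(\bfZ \preceq \bfc^{(\bfl)}_{\bfm},\ \bfZ \preceq \bfc^{(\bfs)}_{\bfm}\bigr),
\end{equation}
where the product weight $1/S^P$ comes from $\Pr(\bfZ \in \mathcal{S}_{\bfs}) = 1/S$ applied in each of the $P$ independent shifting directions, and $\preceq$ denotes the coordinatewise order.

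Next I would decompose the sum over $\bfs$ coordinate by coordinate. For a fixed coordinate $p$, exactly as in the scalar proof, $c^{(s_p)}_{m_p} \le c^{(l_p)}_{m_p}$ when $s_p \le l_p$ and $c^{(s_p)}_{m_p} > c^{(l_p)}_{m_p}$ when $s_p > l_p$, so the event $\{Z_p \le c^{(l_p)}_{m_p}, Z_p \le c^{(s_p)}_{m_p}\}$ collapses to $\{Z_p \le c^{(\min(s_p,l_p))}_{m_p}\}$. Intersecting over $p$ and summing, the $S^P$ terms split into a ``dominant'' block of $(S-l_1+1)\cdots(S-l_P+1)$ — more carefully, the terms with $s_p \ge l_p$ for every $p$ — which each equal $\Pr(\bfZ \preceq \bfc^{(\bfl)}_{\bfm})$ exactly, plus a remainder whose number of terms is $S^P - \prod_p(S-l_p+1) = O(S^{P-1})$ because $\bfl$ is fixed. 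Dividing by $S^P$, the dominant block contributes $\prod_p(S-l_p+1)/S^P \to 1$ and the remainder contributes $O(S^{P-1}/S^P) = O(1/S) \to 0$, giving $\lim_{N,S\to\infty}\Pr(\bfZ^\dagger \preceq \bfc^{(\bfl)}_{\bfm}) = \Pr(\bfZ \preceq \bfc^{(\bfl)}_{\bfm})$. Pointwise convergence of the CDF on the dense set of grid corners, together with continuity of $F_\bfZ$ (which follows from $\bff_\bfZ$ being continuous, Assumption~\ref{ass:density_rhs_multi}.b) guaranteeing the grid corners become dense in $\mathbb{A}^P$ as $S\to\infty$), then upgrades to convergence at every $\bfa \in \mathbb{A}^P$ by a standard sandwiching argument between nearby grid corners.

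The main obstacle I anticipate is the bookkeeping in the coordinate-wise decomposition of the orthant probabilities: unlike the scalar case, where "the two cases can always be disentangled," here one must verify that the multivariate min-reduction $\{Z_p \le c^{(l_p)}_{m_p}\} \cap \{Z_p \le c^{(s_p)}_{m_p}\} = \{Z_p \le c^{(\min(s_p,l_p))}_{m_p}\}$ really does factor through the joint event without interaction across coordinates — this is true because the shifting grid is a Cartesian product so the ordering in each coordinate is independent of the others, but it needs to be stated carefully, and one must also confirm that the reallocation mechanism defining $\bfZ^\dagger$ assigns mass uniformly within each product cell (so that the $1/S^P$ weights are correct and the conditional-probability step mirrors the scalar one). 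A secondary, more routine point is handling the boundary grid indices $m_p \in \{1, M\}$, where the scalar proof used the modified weights $1/(s-1)$ and $1/(S-s+1)$; in the multivariate case these appear as tensor products of such weights, but the same telescoping shows the limit is unaffected, so I would relegate that to a remark rather than belabor it.
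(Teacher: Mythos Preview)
Your proof has a structural misreading of the multivariate shifting scheme. You treat it as $S^P$ split samples indexed by a multi-index in $[1,S]^P$, with each coordinate shifted independently and weight $1/S^P$. But the hypothesis $\Pr(\bfZ\in\mathcal{S}_s)=1/S$ in the proposition says there are exactly $S$ split samples indexed by a \emph{scalar} $s$, with all $P$ coordinates shifted simultaneously by that common index (so $c^{(s)}_{p,m}=a_{p,l}+(s-1)h_p+(m-1)(a_{p,u}-a_{p,l})/(M-1)$ for every $p$). Three concrete consequences: (i) a generic working-sample corner $\bfa=(c^{(l_1)}_{m_1},\dots,c^{(l_P)}_{m_P})$ with unequal $l_p$ is not a corner of any single split grid, so there is no multi-indexed split grid for it to belong to; (ii) your displayed identity with weight $S^{-P}$ and a sum over $[1,S]^P$ does not describe the construction of $\bfZ^\dagger$, whose law is a mixture over $S$ (not $S^P$) components---the correct analogue of the univariate step is the single sum $S^{-1}\sum_{s=1}^{S}\Pr\bigl(\bfZ\preceq\bfa,\ Z_p\le c^{(s)}_{m_p}\text{ for all }p\bigr)$; and (iii) your count $\prod_p(S-l_p+1)$ for the dominant block is therefore off.

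The good news is that your coordinate-wise min-reduction survives intact. With scalar $s$, the identity $\{Z_p\le c^{(l_p)}_{m_p}\}\cap\{Z_p\le c^{(s)}_{m_p}\}=\{Z_p\le c^{(\min(l_p,s))}_{m_p}\}$ still holds in each coordinate, and because all coordinates now share the same $s$, the dominant block is simply the one-dimensional range $\{s:\ s\ge\max_p l_p\}$, of size $S-\max_p l_p+1$. Dividing by $S$, that block contributes $(S-\max_p l_p+1)S^{-1}\Pr(\bfZ\preceq\bfa)\to\Pr(\bfZ\preceq\bfa)$, while the remaining $\max_p l_p-1$ terms contribute $O(1)/S\to 0$ for fixed $l_1,\dots,l_P$. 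This is exactly the ``same inductive logic'' the paper invokes from the univariate proof; once you repair the indexing, your sandwiching step and your remark on the tensor-product boundary weights go through unchanged.
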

\noindent See the complete proof in the \href{https://regulyagoston.github.io/papers/MSV\_online\_supplement.pdf}{online supplement} under Section 1.1.

%
\subsection{Some further remarks}

\subsubsection*{Speed of convergence}

We investigated the speed of convergence for $F_{Z^\dagger}(\cdot) \to F_Z(\cdot)$ as a function of the number of split samples $(S)$ to get insight into the choice of $S$ in finite samples for the univariate case. In general, the rate of convergence is $1/S$, except around the boundaries of the support. Boundaries of the support are defined as: $c^{(1)}_1 + (c^{(S)}_2-c^{(1)}_1)$ for the lower bound, and $c^{(1)}_M + (c^{(1)}_{M-1}-c^{(1)}_M)$ for the upper bound. In these regions, the speed of convergence is $\frac{\log S}{S}$. This result means that even with small values of $S$, one can get a decent mapping of the underlying marginal distribution of $Z$.\footnote{We show the detailed derivations for the speed of convergence in \href{https://regulyagoston.github.io/papers/MSV\_online\_supplement.pdf}{online supplement}, under Section 1.2.}

\subsubsection*{Unbounded support}

Let us consider the case when $Z$ has unbounded support: $a_l = -\infty$ or $a_u = \infty$ or both. If we set the first and last intervals' boundary points to infinity, we are facing censoring. As will be discussed in Section \ref{sec:nonlinear}, one can extend our approach for censoring by using the continuous mapping theorem. However, here we propose a simpler solution: truncate observations that fall into the first or last interval: $\mathcal{C}^{(s)}_1, \mathcal{C}^{(s)}_M, \,\, \forall s$. Although this method loses observations and important units, the flip side is that for the truncated distribution now all previous results apply.

\subsubsection*{Data privacy}

The $\varepsilon$-differential privacy can also be used for split sampling methods. The shifting method introduces a random assignment of each unit to a specific split sample (with a uniform probability of $1/S$). The user receives both the discretized value and the discretization scheme.\footnote{Creating the synthetic variable $Z^\dagger$ can be done by the user; no additional information is gained or provided from that step.}
\par
Realized interval sizes are specific to the shifting method. In general, $||\mathcal{C}^{(s)}_m|| = \frac{a_u-a_l}{M-1},$ $\forall m = 2,\dots, M-1$, thus it does not depend on the number of split samples, only on the number of intervals $M$. However, at the lower and higher bounds, the interval sizes are different: $||\mathcal{C}^{(s)}_m|| = (s-1) \frac{a_u-a_l}{S(M-1)},\,\, \forall s,\, \text{ with } m=1 \text{ or } m=M$. This implies that around the boundaries, the number of split samples influences the size of the intervals, thus $\varepsilon$ depends on both $M$ and $S$, that causes an overall weaker privacy protection. As we show in our empirical application, this does not have real empirical relevance. Note, however, that truncation of the boundary values could provide a simple fix for this issue.

\subsubsection*{Other split sampling methods}

It is possible to work out other discretization schemes that can map the underlying marginal distribution of $Z$. In fact, in the \href{https://regulyagoston.github.io/papers/MSV\_online\_supplement.pdf}{online supplement}, under Section 4, we introduce the \textit{magnifying method}, which offers an alternative method to learn $f_Z(z;a_l,a_u)$. The key to all possible split sampling methods is to show convergence in distribution. Different discretizations may yield different speeds of convergence and other $\varepsilon$-differential privacy properties.

\section{Linear Models with Discretized Variable(s)}
\label{sec:estimation}

This section discusses how to use the synthetic $Z^\dagger$ variable to identify and estimate $\beta$ in a linear regression model. As linear models and OLS have well-known properties, we directly address modelling with multiple regressors. First, we discuss the identification $\beta$ for each model, depending on where the discretization happens. This step helps us to pin down which theoretical expressions to quantify. In all cases, we highlight the discretized variable with an asterisk in the superscript.
\par
Let $\bfy = (Y_1,\ldots,Y_N)'$ a $(N\times 1)$ vector, $\bfX = (\bfx_1,\dots,\bfx_k,\dots,\bfx_K)$ with dimensions $(N \times K)$ where $\bfx_k = (\bfx_{k,1},\ldots,\bfx_{k,N})'$. The $(K \times 1)$ parameter vector of interest is $\bfbeta = (\beta_1,\ldots,\beta_K)'$. We allow further conditioning variables that are not discretized denote by $\bfW = (\bfw_1,\dots,\bfw_j,\dots,\bfw_J)$ i.e., a $(N \times J)$ matrix with $\bfw_j = (\bfw_{j,1},\ldots,\bfx_{j,N})'$ and parameter vector $\bfgamma = (\gamma_1,\ldots,\gamma_J)'$, as this mimics standard practices better. $\bfbeta$ and $\bfgamma$ belong to a subset of a compact finite-dimensional space $(\mathcal{B},\mathcal{G})$.
Let the unknown data-generating process (DGP) be
\begin{equation} \label{eq:ols_model}
    \bfy = \bfX \bfbeta + \bfW \bfgamma + \bfu\,, \qquad \bfu \sim iid \left ( \bfzero, \sigma^2_{\bfu} \bfI \right ) \,,
\end{equation}
where $\bfX$ and $\bfW$ are linearly separable. $\bfu$ is homoskedastic (for simplicity), and we use the usual OLS assumptions: $\e[\bfu|\bfX,\bfW] = 0$ and $\plim_{N\to\infty}\left[\frac{1}{N}\bfX\bfM_{\bfW}\bfX\right]^{-1}=\bfQ$, positive definite matrix, with the usual residual maker $\bfM_{\bfW} = \bfI - \bfW (\bfW'\bfW)^{-1}\bfW'$. In all three discretization cases, we follow a three-step procedure:
\begin{enumerate}
    \item[Step 1.] Point identify $\bfbeta$, through defining conditional expectations using the known discretization intervals $\mathcal{C}^{(s)}_m$.
    \item[Step 2.] Use the synthetic variable $(\bfZ^\dagger)$ to estimate these conditional expectations and then discuss the OLS estimator and its asymptotic properties.
    \item[Step 3.] Use the estimator from Step 2 and replace variable(s) from the linear model with their estimated conditional means. We derive an OLS estimator for $\bfbeta$ and show the asymptotic properties of the estimator in the modified regression, dependent on where discretization happened.
\end{enumerate}
\subsection{Explanatory variable}
\label{sec:est_rhs}
First, let us discuss the case when we observe $\bfX^*$, instead of $\bfX$. We utilize the result of Proposition \ref{prop:shifting_multi}, hence $\bfX^\dagger \overset{d}{\rightarrow} \bfX$, where $\bfX^\dagger = \left(\bfx^\dagger_1,\dots,\bfx^\dagger_k,\dots,\bfx^\dagger_K\right)$ and $\bfx^\dagger_k = (\bfx_{k,1},\ldots,\bfx_{k,N})'$. 
\subsubsection*{Step 1: Identification with \texorpdfstring{$\bfX^*$}{}}
To identify $\bfbeta$, from the DGP defined by Equation (\ref{eq:ols_model}), we need to use the conditional expectations given $\bfX^* \in\bf\mathcal{C}^{(s)}_m$. Applying the conditional expectation operator on our model yields, 
%

\begin{equation}\label{eq:ols_plug_rhs}
    \e\left[\bfy|\bfX^* \in \bf\mathcal{C}^{(s)}_m,\bfW\right] = \e\left[\bfX|\bfX^* \in \bf\mathcal{C}^{(s)}_m\right] \bfbeta + \e\left[\bfW|\bfX^* \in \bf\mathcal{C}^{(s)}_m\right] \bfgamma + \e\left[\bfu|\bfX^* \in \bf\mathcal{C}^{(s)}_m\right] \,,
\end{equation}
\noindent
Note that the classes, $\bf\mathcal{C}^{(s)}_m$ are mutually exclusive for given $s$ along $m$ and $k$.
\par
Let us define $\kappa(s,\bfm)$ a function\footnote{Note that $\bfm = (m_1,\dots,m_K)$ is a vector, defining the grid points in $\bf\mathcal{C}^{(s)}_m$ for each variable $k$.} that maps the conditional expectations of $\bfX$ for given grid $\bf\mathcal{C}^{(s)}_m$. Proposition \ref{prop:shifting_multi} ensures equality between the conditional expectation of the observable $\bfX^\dagger$, and the conditional expectation of the unknown $\bfX$ given $\bfX^* \in \bf\mathcal{C}^{(s)}_m$, which is needed to identify $\bfbeta$,
\begin{equation}
   \kappa(s,\bfm) := \e\left[\bfX | \bfX^* \in \bf\mathcal{C}^{(s)}_m \right] =  \lim_{N,S\to\infty}\e\left[\bfX^\dagger | \bfX^* \in \bf\mathcal{C}^{(s)}_m \right] \,.
\end{equation}
\subsubsection*{Step 2: OLS estimator for \texorpdfstring{$\e\left[\bfX|\bfX^* \in \bf\mathcal{C}^{(s)}_m\right]$}{}}
Let $\bfkappa = (\kappa(1,{\bf1}),\dots,\kappa(S,{\bf1}),\dots,\kappa(S,\bfm),\dots,\kappa(S,\bfM))'$ be a vectorized version of $\kappa(s,\bfm)$ with $(SM^K \times 1)$ dimension. We propose a joint estimation for $\bfkappa$ using OLS\footnote{We can estimate all conditional expectations jointly via OLS. Otherwise simple conditional means for a given interval $\bf\mathcal{C}^{(s)}_m$ would yield the same result. Joint estimation however allows to establish asymptotic properties more easily.},
\begin{equation}
    \hat{\bfkappa} = \left( \mathbf{1}'_{ \{ \bfX^\dagger \in \bf\mathcal{C}^{(s)}_m \} } \mathbf{1}_{\{ \bfX^\dagger \in \bf\mathcal{C}^{(s)}_m \} }\right)^{-1} \mathbf{1}'_{ \{ \bfX^\dagger \in \bf\mathcal{C}^{(s)}_m \} } \bfX^\dagger \,\,,
\end{equation}
where $\mathbf{1}_{\{\cdot\}}$ stands for the indicator function and results in a matrix with dimensions $(N\times SM^K)$ with the same ordering as $\bfkappa$. Let us note, $\hat{\bfkappa}$ does not require the actual values of $\bfX$, only $\bfX^\dagger$ and $\bfX^\dagger \in \bf\mathcal{C}^{(s)}_m$. Under standard OLS assumptions,
    $\sqrt{N}\left(\hat{\bfkappa} -  \bfkappa \right) \asim{} \mathcal{N}\left(\mathbf{0},\bfOmega_{\bfkappa} \right)$,
%
\noindent where $\bfOmega_{\bfkappa}$ is the variance-covariance matrix. See the derivations for the proposed estimator in the \href{https://regulyagoston.github.io/papers/MSV\_online\_supplement.pdf}{online supplement}, under Section 2.1.

\subsubsection*{Step 3: OLS estimator for \texorpdfstring{$\bfbeta$}{}}
To get a consistent estimator for $\bfbeta$, let us follow Equation (\ref{eq:ols_plug_rhs}), and define $\ddot\bfX$, which replaces $\bfX^*$ with the corresponding conditional expectations via $\hat\bfkappa$. Furthermore, let $\ddot\bfy = \hat{\e}\left[\bfy|\bfX^* \in \bf\mathcal{C}_m\right]$ and $\ddot\bfW = \hat{\e}\left[\bfW|\bfX^* \in \bf\mathcal{C}_m\right]$ the respective conditional averages.
\begin{equation}
\label{eq:disc_rhs_model}
    \ddot\bfy = \ddot\bfX \bfbeta + \ddot\bfW \bfgamma + \bfe \,,
\end{equation}
\noindent  while $\bfe = (e_1,\dots,e_N)$. Now the OLS estimator for $\bfbeta$ is
\begin{equation}
    \hat{\bfbeta} = \left ( \ddot\bfX^\prime \bfM_{\ddot\bfW} \ddot\bfX \right )^{-1} \ddot\bfX^{\prime}\bfM_{\ddot\bfW} \ddot\bfy\,,
\end{equation}
\noindent where $\bfM_{\ddot\bfW}$ is the usual residual maker using $\ddot\bfW$. Note that $\E[e_i]=0$ for all $i$ since $\ddot{\bfX}$ is a consistent estimate. Moreover, $\E[e_i e_j] = 0$ for $i\neq j$ due to $\bf\mathcal{C}^{(s)}_m$ being mutually exclusive in $m$ and $k$. As long as the discretization is mean independent from the error term: $\e\left[\bfu|\bfX,\bfW,\bfX^*\in\bf\mathcal{C}_m\right] = \e\left[\bfu|\bfX,\bfW\right] = 0$, the OLS is a consistent estimator.
\par
\begin{proof}[Proof of consistency]
\begin{align*}
    \plim_{N,S \to \infty} \hat\bfbeta 
    &= \plim_{N,S \to \infty} \left ( \ddot\bfX^\prime \bfM_{\ddot\bfW} \ddot\bfX \right )^{-1} \ddot\bfX^{\prime}\bfM_{\ddot\bfW} \ddot\bfy \\
    &= \plim_{N,S \to \infty} \left[\left ( \ddot\bfX^\prime \bfM_{\ddot\bfW} \ddot\bfX \right )^{-1} \ddot\bfX^{\prime}\bfM_{\ddot\bfW} \left(\ddot\bfX \bfbeta + \ddot\bfW \bfgamma + \bfe\right)\right]  \\
    &= \bfbeta + \bf0 + \plim_{N,S \to \infty} \left[ \left ( \ddot\bfX^\prime \bfM_{\ddot\bfW} \ddot\bfX \right )^{-1} \ddot\bfX^{\prime}\bfM_{\ddot\bfW} \bfe \right]\\
    &= \bfbeta + \plim_{N,S \to \infty} \left[ \left ( \ddot\bfX^\prime \bfM_{\ddot\bfW} \ddot\bfX \right )^{-1} \ddot\bfX^{\prime}\bfM_{\ddot\bfW} \e\left[\bfu|\bfX,\bfW,\bfX^*\in\bf\mathcal{C}_m\right] \right] \\
    &= \bfbeta . \qquad \qedhere
\end{align*}    
\end{proof}
\noindent The asymptotic distribution of $\hat\bfbeta$ can be derived similarly. As $\bfu \sim iid \left ( \bfzero, \sigma^2_{\bfu} \bfI \right )$ and as $\ddot{\bfX}$ and $\ddot{\bfW}$ are consistent estimates of the conditional means, $\e\left[\bfe\bfe'|\ddot{\bfX},\ddot{\bfW}\right] = \sigma_{e}^2\bfI$. Under assumptions $N,S \to \infty$ and $\e\left[\bfu|\bfX,\bfW,\bfX^*\in\bf\mathcal{C}_m\right] = \e\left[\bfu|\bfX,\bfW\right]$, we get $\sqrt{N}\left(\hat\bfbeta-\bfbeta\right) \asim{} \mathcal{N}\left(\mathbf{0}, \sigma_{e}^2\left(\ddot{\bfX}'\bfM_{\ddot\bfW}\ddot{\bfX}\right)^{-1} \right)$.

\subsection{Outcome variable}
\label{sec:est_lhs}

The second case is when the outcome variable is discretized, thus $\bfy$ is not observed, only $\bfy^*$. To simplify our analysis let us neglect $\bfW$ along with the parameter vector $\bfgamma$ as they can be seen as part of $\bfX$ and $\bfbeta$ respectively.
\subsubsection*{Step 1: Identification with \texorpdfstring{$\bfy^*$}{}}
To identify $\bfbeta$ with discretized $\bfy^*$, let us partition the domain of $\bfX$'s into mutually exclusive partitions, denoted by $\bf\mathcal{D}_l$ and defined by the researcher, such that it is (mean) independent from the error term $\bfu$. We use the same logic as derived in Equation (\ref{eq:ols_plug_rhs}), but instead of conditioning on $\bfX^* \in \mathcal{C}_m$, we use $\bfX \in \bf\mathcal{D}_l$. Similar derivation can be done,
\begin{equation}
\begin{aligned}
    \e\left[\bfy^* | \bfX \in \bf\mathcal{D}_l \right] &= \e \left[ \bfX | \bfX \in \bf\mathcal{D}_l \right] \bfbeta  \\
    \e\left[\e[\bfy | \bfy^* \in \mathcal{C}_m, \bfX \in \bf\mathcal{D}_l] | \bfX \in D_l \right] &= \e \left[ \bfX | \bfX \in \bf\mathcal{D}_l \right] \bfbeta \\
    \sum_l\sum_m\e[\bfy | \bfy^* \in \mathcal{C}_m, \bfX \in \bf\mathcal{D}_l] \Pr[\bfy^* \in \mathcal{C}_m | \bfX \in \bf\mathcal{D}_l] &= \e \left[ \bfX | \bfX \in \bf\mathcal{D}_l \right] \bfbeta\,. \label{eq:id_discy_beta}
\end{aligned}
\end{equation}
$\Pr[\bfy^* \in \mathcal{C}_m | \bfX \in \bf\mathcal{D}_l], \e \left[ \bfX | \bfX \in \bf\mathcal{D}_l \right]$ are known quantities and we aim to learn $\e[\bfy | \bfy^* \in \mathcal{C}_m, \bfX \in \bf\mathcal{D}_l]$. Note that $\e \left[ \bfu | \bfX \in \bf\mathcal{D}_l \right] = 0$ by the definition of $\bf\mathcal{D}_l$.
\par
Let us define $\pi(s,m,\bfl)$ a function that maps the conditional expectations of $\bfy$ for a given interval $\bfy \in \mathcal{C}^{(s)}_m$ and partition $\bfX \in \bf\mathcal{D}_l$. 
\begin{equation}
    \pi(s,m,\bfl) := \e\left[ \bfy | \bfy^* \in \mathcal{C}^{(s)}_m, \bfX \in \bf\mathcal{D}_l \right] = \lim_{N,S\to\infty}\e\left[ \bfy^\dagger | \bfy^\dagger \in \mathcal{C}^{(s)}_m, \bfX \in \bf\mathcal{D}_l \right] \,,    
\end{equation}
where $\bfy^\dagger = (Y^\dagger_1,\ldots,Y^\dagger_N)$ and we used the results of the shifting method to show convergence. 
\subsubsection*{Step 2: OLS estimator for \texorpdfstring{$\e\left[ \bfy | \bfy^* \in \mathcal{C}^{(s)}_m, \bfX \in \bf\mathcal{D}_l \right]$}{}}
Let $\bfpi = \left(\pi(1,1,{\bf1}),\dots,\pi(S,1,{\bf1}),\dots,\pi(S,M,{\bf1}),\dots,\pi(S,M,\bfl),\dots,\pi(S,M,\bfL)\right)'$ be the vectorized version of $\pi(s,m,\bfl)$. Estimating $\bfpi$, via OLS yields,
\begin{equation}\label{eq:estimator_cond_mean_lhs}
    \hat{\bfpi} = \left( \mathbf{1}'_{ \{ \bfy^\dagger \in \mathcal{C}^{(s)}_m, \bfX \in \bf\mathcal{D}_l\} } \mathbf{1}_{\{ \bfy^\dagger \in \mathcal{C}^{(s)}_m, \bfX \in \bf\mathcal{D}_l \} }\right)^{-1} \mathbf{1}'_{ \{ \bfy^\dagger \in \mathcal{C}^{(s)}_m, \bfX \in \bf\mathcal{D}_l \} } \bfy^\dagger \,.
\end{equation}
\noindent
The vector $\hat{\bfpi}$ has dimensions of $(SML^K \times 1)$, $ \mathbf{1}_{\{\cdot\}}$ is the corresponding indicator matrix, and estimation of $\bfpi$ does not require the actual values of $\bfy$, only $\bfy^\dagger$, and observing $\bfy^\dagger \in \mathcal{C}^{(s)}_m$ and $\bfX \in \bf\mathcal{D}_l$. Under the standard OLS assumption,
%
    $\sqrt{N}\left(\hat{\bfpi} - \bfpi \right) \asim{} \mathcal{N}\left(\mathbf{0},\bfOmega_{\bfpi} \right)$,
%
\noindent where $\bfOmega_{\bfpi}$ is the corresponding variance-covariance matrix. See the derivations for the proposed estimator in the \href{https://regulyagoston.github.io/papers/MSV\_online\_supplement.pdf}{online supplement}, under Section 2.2.

\subsubsection*{Step 3: OLS estimator for \texorpdfstring{$\bfbeta$}{}}
As we have shown in our identification strategy, the standard regression model can be rewritten in the form of Equation (\ref{eq:id_discy_beta}). 
Let $\tilde{\bfy}$ be a $(N\times 1)$ vector, where we replace the discretized elements of $\bfy^*$ with the consistent estimates of $\sum_l\sum_m\e[\bfy | \bfy^* \in \mathcal{C}_m, \bfX \in \bf\mathcal{D}_l] \Pr[\bfy^* \in \mathcal{C}_m | \bfX \in \bf\mathcal{D}_l]$, using $\hat\bfpi$ and sample analogues for $\Pr[\bfy^* \in \mathcal{C}_m | \bfX \in \bf\mathcal{D}_l]$.
Let us redefine $\tilde{\bfX} = \hat\e \left[ \bfX | \bfX\in \bf\mathcal{D}_l \right]$, as the conditional sample means for $\bfX$ in partition $\bf\mathcal{D}_l$. With the replaced measures, we get
\begin{equation}
    \tilde{\bfy} = \tilde{\bfX} \bfbeta + \bfnu ,
\end{equation}
\noindent 
where $\bfnu = \left ( \nu_1, \ldots, \nu_N \right )'$. Note that $\E[\nu_i]=0$ for all $i$ since $\tilde{\bfy}$ and $\tilde{\bfX}$ are consistent estimates. Moreover, $\E[\nu_i \nu_j] = 0$ for $i\neq j$ due to that $\bf\mathcal{D}_l$ are mutually exclusive $\forall l,$ and $\E [\nu_i | \tilde{\bfX} ] = 0$ since the partitioning is (mean) independent from the error term and does not affect the sampling error. Let $\hat{\bfbeta} = \left ( \tilde{\bfX}' \tilde{\bfX} \right )^{-1} \tilde{\bfX}' \tilde{\bfy}$, then under similar argument, $\hat{\bfbeta} - \bfbeta = o_p(1)$.
\par 
The asymptotic distribution of $\hat\bfbeta$ can be derived in the same spirit. As $\bfu \sim iid \left( \bfzero, \sigma^2_{\bfu} \bfI \right )$ and $\tilde{\bfX}$ is a consistent estimate of the conditional means, $\e\left[\bfnu\bfnu'|\tilde{\bfX}\right] = \sigma_{\nu}^2\bfI$. The proposed estimator is asymptotically distributed as $\sqrt{N}\left(\hat\bfbeta-\bfbeta\right) \asim{} \mathcal{N}\left(\mathbf{0}, \sigma_{\nu}^2\left(\tilde{\bfX}'\tilde{\bfX}\right)^{-1} \right)$. 

%
%
\subsection{Discretization on both sides}
\label{sec:est_both}

Our last case is when the discretization happens with both outcome $\bfy^*$, and with one or more explanatory variables $\bfX^*$. In this case, we do not need to partition the domain of $\bfX$, but can use the discretization grids $\bf\mathcal{C}^{(s)}_m$ for the explanatory variable. As $\bfy$ is also discretized, we need to partition $\bfW$ with $\bf\mathcal{D}_l$, similarly to the case discussed in Section \ref{sec:est_lhs}. Identification is the same as with discretized outcome variable and the sample estimator for the conditional expectations follows the same logic as the OLS estimator for the $\bfbeta$ parameter. We discuss this case in details in the \href{https://regulyagoston.github.io/papers/MSV\_online\_supplement.pdf}{online supplement}, under Section 3.

\section{Nonlinear Models and Panel Data}
\label{sec:nonlinear}
A possible extension is the use of nonlinear models with one or more variables discretized. Let us consider the following general model:
\begin{equation} \label{eq:nonlinear}
    \bfy = g(\bfX,\bfW; \bfbeta,\bfgamma) + \bfu ,
\end{equation}
\noindent where $g(\cdot)$ denotes a known continuous function. 
\par
The identification procedure is similar to Equations (\ref{eq:ols_plug_rhs}, \ref{eq:id_discy_beta}), with the exception that $g(\e[\cdot;\bfbeta]) \neq \bfbeta g(\e[\cdot])$ in general. As shown with \textit{split sampling} $\bfZ^\dagger \overset{p}{\rightarrow} \bfZ$. If so, with discretized explanatory variable(s) by continuous mapping theorem $g(\bfX^\dagger;\bfbeta)~\overset{p} {\rightarrow}~g(\bfX;\bfbeta)$. Similar argument applies when discretization happens on the left hand side $(\bfy^*)$ or on both sides $(\bfy^*,\bfX^*)$.
\par
As the estimation, let $\hat{\bfbeta}(\bfy, \bfX,\bfW)$ denote a consistent estimator of $\bfbeta$ with $\rho (\bfX,\bfW) =$\\ $\sqrt{N} \left [ \hat{\bfbeta} (\bfy, \bfX, \bfW) - \bfbeta \right ]$ such that $\rho(\bfy, \bfX,\bfW) \overset{d}{\rightarrow} \bff(\bf0, \bfOmega)$. Under the assumptions made earlier, for all cases, the nonlinear quantities should converge too. When the regressors are discretized, we need $\bfX^\dagger \overset{d}{\rightarrow} \bfX \implies \rho (\bfy, \bfX^\dagger,\bfW) \overset{d}{\rightarrow} \rho (\bfy, \bfX,\bfW) $. In case of discretized outcome, $\bfy^\dagger \overset{d}{\rightarrow} \bfy \implies \rho (\bfy^\dagger, \bfX,\bfW) \overset{d}{\rightarrow} \rho (\bfy, \bfX,\bfW)$; while for both variables $(\bfy^\dagger, \bfX^\dagger) \overset{d}{\rightarrow} (\bfy,\bfX) \implies \rho (\bfy^\dagger, \bfX^\dagger,\bfW) \overset{d}{\rightarrow} \rho (\bfy, \bfX,\bfW)$. By the continuous mapping theorem under appropriate regularity conditions these convergences hold. The technical details of these conditions, however, could be an interesting subject of future research.
\par
\bigskip
In case of panel data, the extension of this methodology is relatively straightforward. The most important difference is in the identification. If the interval of an individual does not change over the time periods covered, the individual effects in the panel and the parameter associated with the choice variable cannot be identified separately. The within transformation would wipe out the interval variable as well. When the interval does change over time, but not much, then we are facing weak identification, i.e., in fact very little information is available for identification, so the parameter estimates are going to be highly unreliable.\footnote{See more on this issue in the \href{https://regulyagoston.github.io/papers/MSV\_online\_supplement.pdf}{online supplement}, Section 5.6} 
In the \href{https://regulyagoston.github.io/papers/MSV\_online\_supplement.pdf}{online supplement}, Section 6, we extend our method towards fixed effect type of estimators. This is a good opportunity to raise awareness through a phenomenon that we call the {\it perception effect}. The perception effect is relevant if the discretization happens by surveys. There is much evidence in the behavioral literature that the answers to a question may depend on the way the question is asked (see, e.g., \citealp{Haisley2008}). 
Note, that this is present regardless of whether split sampling has been performed or not. However, with split sampling, there is a way to tackle this issue, much akin to the approach a similar problem has been dealt with in the panel data literature.

%
\section{Simulation and Empirical Evidences}
\label{sec:evidences}

\subsection{Monte Carlo evidence}

For the simulation experiments, we consider a simple univariate linear regression model
\begin{equation}
    Y_i = X'_i\beta+\epsilon_i \, ,
\end{equation}
where we discretize $Y_i$, or $X_i$ or both. We set the parameter of interest $\beta=0.5$ and report the Monte Carlo average bias and its standard deviation in parenthesis 
with $1,000$ repetitions and $N=10,000$ observations.
As the size of the bias depends on mapping the conditional expectations, we use multiple distributions for $\epsilon_i$ or $X_i$. To be more specific, we use ``\textit{Normal}'' (standard normal distribution truncated at $-1$ and $3$), ``\textit{Logistic}'' (standard logistic distribution truncated at $-1$ and $3$), ``\textit{Log-Normal}'' (standard log-normal distribution truncated at $4$ and subtracted $1$ to adjust the support), ``\textit{Uniform}'' (uniform distribution between $-1$ and $3$), ``\textit{Exponential}'' (exponential distribution with rate parameter $0.5$, truncated at $4$ and subtracted $1$) and ``\textit{Weibull}'' (weibull distribution with shape parameter $1.5$ and scale parameter $1$, truncated at $4$ and subtracted $1$). All of these distributions have known finite support between $-1$ and $3$.
We use these distributions to generate $\epsilon_i$ when the outcome or both variables are discretized. In these cases $X_i \sim \mathcal{N}(0,0.25,-1,1)$ (truncated normal distribution between -1 and 1). When discretization happens with the explanatory variable $X_i$, we use the different distributions to generate $X_i$ and use the same truncated normal distribution for $\epsilon_i$. This setup allows us to control for the domain of $Y_i$ in all cases.
For the discretization of the variable(s) in all cases we use $M=5$ and for split sampling $S=10$. We have experimented with different setups; see more in the \href{https://regulyagoston.github.io/papers/MSV\_online\_supplement.pdf}{online supplement}, Section 7.
\par
To estimate $\beta$, our method is shown by the ``\textit{Shifting method}''.\footnote{We used mid-values as observations for the working sample's values $(v^{WS}_b)$.}. As alternatives, we use ``\textit{Mid-point regression}'', a commonly used naive model. This method uses mid-points for $v_m$ with OLS for estimation. When the outcome variable is discretized, we use further popular estimation methods: ``\textit{Set identification}'' which provides estimates for the lower and upper bounds of the parameter set;\footnote{Estimation is based on \cite{beresteanu2008asymptotic}, we use the package by \cite{beresteanu2010stata}.} ``\textit{Ordered probit and logit}'' models\footnote{Note that the estimated maximum likelihood ``naive'' parameters reported here are not designed to recover $\beta$ and to be interpreted in the linear regression sense. Therefore, we call the difference from $\beta$ {\sl distortion} rather than bias.} and ``\textit{Interval regression}''\footnote{We assume a Gaussian conditional distribution to model the censored interval outcome.} as alternative estimators.
We specify our models when applicable to estimate as $Y_i = \alpha + X_i'\beta + \eta_i$, where $\epsilon_i = \alpha + \eta_i$ with $\e (\epsilon_i) = \alpha$, to adjust for those distributions, which does not have zero mean.\footnote{Ordered choice models' implementation in Stata removes the intercept parameter to identify $\beta$.}
\begin{table}[ht!]
    \begin{center}
     \caption{Monte Carlo average bias and standard deviations}
     \label{tab:mc_results}
    \resizebox{.98\textwidth}{!}{%
    \begin{tabular}{l|cccccc}
    	\toprule
    	\toprule
    	\multicolumn{7}{c}{Discretized explanatory variable $(X_i^*)$} \\ \midrule
    	\midrule
    	& Normal & Logistic & Log-Normal & Uniform & Exponential & Weibull \\ 
    	\midrule
    	\multirow{2}{*}{Mid-point regression} 
    	&-0.0252&-0.0101&-0.0174&0.0002&0.0005&-0.0422\\
    	&(0.0057)&(0.0046)&(0.0051)&(0.0040)&(0.0102)&(0.0073)\\ \midrule
    	\multirow{2}{*}{Shifting method $(S=10)$} 
    	&-0.0037&-0.0003&-0.0022&0.0002&0.0023&-0.0015\\
    	&(0.0060)&(0.0046)&(0.0050)&(0.0038)&(0.0094)&(0.0073)\\
    	\midrule
    	\midrule
    	\multicolumn{7}{c}{Discretized outcome variable $(Y_i^*)$} \\
    	\midrule
    	\midrule
    	& Normal & Logistic & Log-Normal & Uniform & Exponential & Weibull \\ 
    	\midrule
    	\multirow{2}{*}{Set identification$^\dagger$} 
    	& $\left[-1.1,1.15\right]$ & $\left[-1.09,1.15\right]$ & $\left[-1.09,1.16\right]$ & $\left[-1.07,1.17\right]$ & $\left[-1.06,1.19\right]$ & $\left[-1.09,1.15\right]$ \\
    	& [(0.02),(0.02)] & [(0.03),(0.03)] & [(0.02),(0.02)] & [(0.03),(0.03)] & [(0.03),(0.03)] & [(0.02),(0.02)] \\\midrule
    	\multirow{2}{*}{Ordered probit$^*$} 
    	& 0.1971 & 0.0688 & 0.2085 & 0.0158 & 0.0986 & 0.4461 \\
    	& (0.0256) & (0.0253) & (0.0262) & (0.0234) & (0.0241) & (0.0295) \\\midrule
    	\multirow{2}{*}{Ordered logit$^*$} 
    	& 0.6509 & 0.3814 & 0.6862 & 0.2379 & 0.4338 & 1.2085 \\
    	& (0.0464) & (0.0455) & (0.0499) & (0.0422) & (0.044) & (0.0546) \\\midrule
    	\multirow{2}{*}{Interval regression} 
    	& 0.0268 & 0.0332 & 0.0371 & 0.0491 & 0.0663 & 0.0397 \\
    	& (0.0198) & (0.0249) & (0.0221) & (0.0271) & (0.0249) & (0.0166) \\\midrule
    	\multirow{2}{*}{Mid-point regression} 
    	& 0.0253 & 0.0322 & 0.0362 & 0.0490 & 0.2077 & 0.0314 \\
    	& (0.0195) & (0.0236) & (0.0216) & (0.0273) & (0.0128) & (0.0157) \\\midrule
    	\multirow{2}{*}{Shifting method} 
    	& -0.0010 & -0.0017 & -0.0010 & -0.0014 & -0.0017 & -0.0003 \\
    	& (0.0211) & (0.0239) & (0.0215) & (0.0271) & (0.0125) & (0.0147) \\
    	\midrule
    	\midrule
    	\multicolumn{7}{c}{Both variables are discretized $(Y_i^*,X_i^*)$} \\
    	\midrule
    	\midrule
    	& Normal & Logistic & Log-Normal & Uniform & Exponential & Weibull \\ 
    	\midrule
    	\multirow{2}{*}{Mid-point regression} 
    	&-0.0853&-0.0788&-0.0752&-0.0635&0.0797&-0.0759\\
    	&(0.0178)&(0.0213)&(0.0190)&(0.0243)&(0.0116)&(0.0137)\\ \midrule
    	\multirow{2}{*}{Shifting $(S=10)$} 
    	&-0.0027&0.0156&0.0104&0.0156&0.0006&0.0108\\
    	&(0.0235)&(0.0269)&(0.0243)&(0.0294)&(0.0132)&(0.0156)\\ 
    	\bottomrule
    \end{tabular}
    }
    \end{center}
    \vspace{-10pt}
    \linespread{1.0}\selectfont
    \scriptsize{
    Row names refers to $X_i$ for ``discretized explanatory variable'' and to $\varepsilon_i$ when the outcome or both variables are discretized. Average bias and the standard deviations of the estimates in parenthesis are reported. $\beta = 0.5$. In case of $Y^*_i$, we have used $L=50$ with equal distances to partition $X_i$. When both variables are discretized, we used $M=M_Y=M_X=5$ and $S=S_Y=S_X=10$. \\
    $^\dagger$ Set identification gives the lower and upper boundaries for the valid parameter set. We report these bounds subtracted from the true parameter; therefore, it should give a (close) interval around zero. \\
    $^*$ Distortion from the true $\beta$ is reported. Ordered probit and logit models' maximum likelihood parameters do not aim to recover the true $\beta$ parameter; therefore we do not to call them biased.}
\end{table}
Table \ref{tab:mc_results} shows our results. 
The shifting method provides smaller average bias than the mid-point regression by magnitudes between 5-50 almost everywhere. The two exceptions can be found in the case of discretized explanatory variable. When we use a uniform distribution, there is no bias for mid-point regression, as the conditional expectation is the mid-point of the interval. The second case is the exponential distribution, where the case is similar, as the curvature of the used distribution is rather flat.\footnote{This result is in line with the theoretical results shown in the \href{https://regulyagoston.github.io/papers/MSV\_online\_supplement.pdf}{online supplement}, Section 5.} The shifting method also outperforms the other competing methods, when discretization happens on the left hand side and provides significant reduction in the bias when both variables are discretized.
\par
We have run several other Monte Carlo experiments and the results are similar: the shifting method outperforms all alternatives. An important result is consistency, as we increase $N$ along with $S$, we get smaller biases. This consistency does not hold for the competing methods. For a detailed discussion see the \href{https://regulyagoston.github.io/papers/MSV\_online\_supplement.pdf}{online supplement}, Section 7.

\subsection{The Australian gender wage gap}
\label{sec:ato_case}
We illustrate our approach with a short study of the Australian gender wage gap from 2017. We obtained wage and socio-economic data from the Australian Tax Office (ATO). The individuals' sample files record a $2\%$ sample of the whole population,\footnote{For details, see ATO's website: \url{https://www.ato.gov.au}.} including the actual wage values. This enabled us to estimate $\beta$ when no discretization happens. In practice, researchers use interval censored data coming from surveys or privatized data. To mitigate this fact, we employ three different equally distanced discretization methods with $M=3,5$, and $10$ and a privatization method that adds Laplacian noise to the data.\footnote{We used the python package \texttt{diffprivlib} by \cite{diffprivlib} for differential privacy implementation.}
\par
Table \ref{tab:gender_coef} shows regression estimates of the gender dummy (1 is female) on log yearly wage while controlling for age, age square, marital status, occupation, and regional variables. The \textit{``Directly observed''} row shows the estimated parameter when the actual values are used for yearly wages. In the following rows we show the estimates when the discretization process is done in equal-sized intervals with $M=3,5,10$. Mid-point regression does not use split samples, while with the shifting method we use $S=10$. In all cases, the yearly wage was discretized, and then we took the log of the discretized mid values as this represents the practice. We report the point estimates along with the standard errors for $\hat\beta$. Furthermore, we report the calculated $\varepsilon$-differential privacy measures based on Equation (\ref{eq:approx_privacy}). Finally, we report an estimate for $\hat\beta$ using differential privacy method, while setting $\varepsilon = 1$. We need to note that the estimates using differential privacy are highly volatile depending on the randomization.
\begin{table}
    \centering
    \caption{Different estimates for gender wage gap based on different discretizations of yearly wages}
    \label{tab:gender_coef}
    \resizebox{\textwidth}{!}{
     \begin{tabular}{l|ccc|ccc|ccc}
    	\toprule
    	\multirow{2}{*}{Directly observed} & \multicolumn{9}{c}{$\hat\beta \qquad \,\,\, SE[\hat\beta]$ } \\ \cline{2-10}\noalign{\vspace{.25em}}
    	& \multicolumn{9}{c}{-0.2261 $\quad$ (0.0023)}\\ 
    	\midrule
    	\midrule
    	\multirow{2}{*}{Discretization methods}  & \multicolumn{3}{c|}{$M=3$} & \multicolumn{3}{c|}{$M=5$} & \multicolumn{3}{c}{$M=10$}  \rule{0pt}{.25em}\\ \cline{2-10}\noalign{\vspace{.25em}}
    	& $\hat\beta$ & $SE[\hat\beta]$ & $\varepsilon$-diff. & $\hat\beta$ & $SE[\hat\beta]$ & $\varepsilon$-diff. & $\hat\beta$ & $SE[\hat\beta]$ & $\varepsilon$-diff. \\ \midrule
    	Mid-point regression
    	& -0.2747 & (0.0033) & 0.0001 
    	& -0.2635 & (0.0031) & 0.0004
    	& -0.2364 & (0.0025) & 0.0011 \\
    	Shifting method ($S=10$)
    	& -0.2214 & (0.0030) & 0.0351 &
    	-0.2348 & (0.0027) & 0.0800 & 
    	-0.2280 & (0.0024) & 0.1178 \\ 
    	\midrule
    	\midrule
    	Differential privacy & \multicolumn{9}{c}{$\hat\beta^{DP} \qquad \,\,\, SE[\hat\beta^{DP}]$ } \\ \cline{2-10}\noalign{\vspace{.25em}}
    	method, with $\varepsilon = 1$ & \multicolumn{9}{c}{0.3686  $\quad$   (2.0573)}\\ 
    	\midrule
    	\midrule
    	N & \multicolumn{9}{l}{125,995}\\
    	\bottomrule
    \end{tabular}
    }
\end{table}
\noindent
The results show clearly that the shifting method provides closer estimates than mid-point regression everywhere, whereas the latter statistically produces different results from the ``directly observed'' value.\footnote{
For more discussion on the empirical example and for other model setups, see \href{https://regulyagoston.github.io/papers/MSV\_online\_supplement.pdf}{online supplement}, Section 8.} Observe that with the shifting method, $\varepsilon$-differential values are larger but still much lower than the commonly used value of 1 in the literature. The differential privacy method, while using a larger value for $\varepsilon$-differential than any of the realized $\varepsilon$-differential value for the discretization method (which means less data protection), produces a worse and imprecise measure. This is not surprising, as it is documented in the differential privacy literature that these methods comes with cost of accuracy (see, e.g., \citealp{bowen2020comparative}, \citealp{cai2021costofdp}, or \citealp{bi2023distribution}).
%

\section{Conclusion}
\label{sec:conclusion}

This paper deals with linear models using sensitive variables. We propose to use the discretization process to protect data privacy or increase response rates in surveys. This results in discretized (also called interval-censored) variables, which makes econometric modeling difficult as the conditional expectation cannot be point identified in general.
\par
We propose the \textit{split sampling} method that introduces multiple discretization schemes; thus, instead of using one set of intervals, the sensitive variable is discretized through multiple versions. We combine these discretized realizations in a way that, under some mild conditions, they converge in distribution to the original (unknown) variable. We introduce the shifting method, an easy to implement algorithm for split sampling that preserves data privacy while also satisfying the conditions for convergence in distribution. With the help of the shifting method, we can point identify parameters of interest in linear models. The necessary point identification assumptions depend on where the discretization happens: i) the sensitive variable is an explanatory variable; ii) the sensitive variable is the outcome; or iii) discretization happens on both sides. We examine each case and derive the appropriate OLS estimators in a multivariate regression setup. We show the asymptotic properties of these estimators and discuss extensions to nonlinear models and panel data. We provide some Monte Carlo evidence to show that our methods have superior finite sample properties compared to the ``usual'' ones. Finally, we apply our method to estimate the Australian gender wage gap. We achieve not only smaller $\varepsilon$-differential values that show better data protection properties, but consistent parameter estimates even when the number of intervals is small.

\bibliographystyle{elsarticle-harv}
\bibliography{msv_references.bib}

\end{document}